\documentclass[sigconf,nonacm]{acmart}
\setkeys{acmart.cls}{balance=false}
\usepackage{tikz}
\usepackage{array}
\usetikzlibrary{arrows.meta,positioning,fit,backgrounds}
\newtheorem{proposition}{Proposition}

\graphicspath{{figures/}}

\title{When Safety Checks Become Privacy Risks: A Privacy-Preserving Paradigm for LLM Agent Safety Auditing}
\author{Yixiang Luo}
\authornote{Yixiang Luo, Ziheng Ma, and Ke Chang contributed equally to this work and share first authorship.}
\affiliation{%
  \institution{Beijing Jiaotong University}
  \city{Beijing}
  \country{China}
}
\email{luoyixiang@bjtu.edu.cn}

\author{Ziheng Ma}
\authornotemark[1]
\affiliation{%
  \institution{Beijing Jiaotong University}
  \city{Beijing}
  \country{China}
}
\email{mzh20010113@bjtu.edu.cn}

\author{Ke Chang}
\authornotemark[1]
\affiliation{%
  \institution{Beijing Jiaotong University}
  \city{Beijing}
  \country{China}
}
\email{changke@bjtu.edu.cn}

\author{Jian Wang}
\affiliation{%
  \institution{Beijing Jiaotong University}
  \city{Beijing}
  \country{China}
}
\email{wangjian@bjtu.edu.cn}

\author{Jiqiang Liu}
\authornote{Corresponding author.}
\affiliation{%
  \institution{Beijing Jiaotong University}
  \city{Beijing}
  \country{China}
}
\email{jqliu@bjtu.edu.cn}

\renewcommand{\shortauthors}{Luo et al.}
\setcopyright{none}
\hypersetup{
  pdftitle={When Safety Checks Become Privacy Risks: A Privacy-Preserving Paradigm for LLM Agent Safety Auditing},
  pdfauthor={Yixiang Luo, Ziheng Ma, Ke Chang, Jian Wang, Jiqiang Liu},
  pdfsubject={Privacy-preserving safety auditing of LLM agents},
  pdfkeywords={LLM agents, safety auditing, input confidentiality, homomorphic encryption}
}

\begin{document}
\raggedbottom
\setlength{\emergencystretch}{1em}
\begin{abstract}
Large language model agents interact with online services through tool calls that may access private data or perform sensitive actions. 
Specialized providers can supply and maintain pre-execution safety auditing capabilities for agent operators.
However, sending task descriptions, call arguments, and interaction histories to an external auditor may expose sensitive user information.
In this work, we propose a paradigm for privacy-preserving safety auditing that separates delegated audit computation from plaintext access and execution authority.
Grounded in a three-domain auditing model, the paradigm connects trusted representation construction, encrypted audit computation, and trusted decision and execution through explicit interfaces.
These interfaces accommodate alternative representations and scorers, while encrypted computation protects audit inputs from a semi-honest auditor.
We instantiate this paradigm in SealCall, a framework combining learned audit representations, CKKS-based encrypted similarity computation, and execution control bound to audited tool calls.
Evaluated on 1,252 public binary candidates, SealCall achieves a mean F1 score of 0.840 under grouped nested evaluation, while preserving plaintext audit decisions in all measured cases. 
A representative configuration completes each encrypted scoring operation in 0.176 s.

\end{abstract}
\keywords{LLM agents, safety auditing, input confidentiality, homomorphic encryption}
\maketitle
\hypersetup{
  pdfauthor={Yixiang Luo, Ziheng Ma, Ke Chang, Jian Wang, Jiqiang Liu},
  pdfsubject={Privacy-preserving safety auditing of LLM agents}
}

\section{Introduction}
\label{sec:introduction}
Large language model (LLM) agents support information retrieval, personal
assistance, and business workflow automation by calling tools for email,
calendars, files, browsing, and online services~\cite{toolformer,agentdojo}.
They combine tools to retrieve information, send content, or change
business records on a user's behalf. Planning errors or prompt injection
in external content can lead them to act beyond the user's
authorization~\cite{agentharm,injecagent}. Pre-execution safety auditing
therefore examines proposed tool calls in their task context before they
produce effects, providing a basis for allowing or blocking
execution~\cite{toolsafe}.

Agent vendors such as Anthropic provide integrated
safeguards~\cite{anthropic-sandboxing}, while self-hosted agent frameworks
such as OpenClaw require operators to configure and maintain
deployment-specific protections~\cite{openclaw-security}.

Traditional computer security already relies on specialized third-party
services for threat detection and incident response~\cite{nist-incident-response}.
These providers supply professional expertise, maintain detection
capabilities, and draw on threat knowledge across customers. This
established division of labor also motivates third-party auditing for
agents: as tools, attack patterns, and deployment requirements evolve,
providers can curate audit references, update models, and operate the
scoring service. Users thereby gain access to continuously maintained
safety capabilities without building a dedicated security operation.

Applying this service model to contextual agent auditing creates a
privacy challenge. Assessing whether a call exceeds the user's
authorization often requires examining the user task, the final arguments of a proposed call,
and the prior interaction history together~\cite{toolsafe,stepguard}.
Consider an office agent allowed to read internal pricing materials but
instructed to send only the public version of a quotation to a specified
customer. Assessing whether a proposed send exceeds that authorization
requires the user's instructions, the recipient, the content, and its
source. Sending these materials to an external audit service in plaintext
can expose internal costs, customer identities, and business
communications, even if the proposed call is ultimately blocked.
We call the additional visibility created by this interface
\emph{audit-induced privacy exposure}. We study how users can obtain
professionally maintained safety auditing while keeping these inputs
confidential from the service provider.

We establish a three-domain auditing model: the \emph{agent
execution domain} runs the agent and supplies the final proposed call
and prior context; the \emph{user control domain} contains a trusted
controller that processes audit inputs, holds the secret key, and controls
tool release; and the \emph{audit service domain} receives encrypted inputs,
performs the agreed computation, and returns encrypted results. These
domains follow actual access and control permissions and can be operated
by different organizations or by services with access isolation within
one organization. Sharing context with an agent does not require
disclosing it to the audit service; information the auditor already
obtains through other channels remains part of its prior knowledge.
With a trusted controller and isolated keys, we target input
confidentiality at the audit interface against a semi-honest auditor
that follows the protocol but seeks to infer private information.

Realizing this allocation requires turning safety judgments grounded in
private context into computation over ciphertexts. The risk of sending
a quotation depends on relationships among the scope of authorization,
the recipient, and the source of the content; a tool name or isolated
arguments provide an incomplete basis for judgment. Context compression
and text embeddings can construct compact numerical
representations~\cite{selective-context,sentence-bert}, while homomorphic
encryption (HE) supports server computation on encrypted
inputs~\cite{cryptonets,ckks}. However, embeddings can still disclose
sensitive content from the original text~\cite{embedding-inversion},
and encryption cannot recover risk information lost in the representation.
Representation construction and encrypted computation must therefore be
designed together to support useful risk assessment with confidential inputs.

We propose a paradigm for privacy-preserving safety auditing that connects
context compression, vectorization, homomorphic computation, and trusted
execution control. Within the trusted domain, the controller retains
the user task and final candidate, selects relevant prior history with
source annotations, and constructs and encrypts a numerical representation.
The audit service computes on this input using reference data or model
parameters encrypted through trusted preparation, including those supplied
by a provider maintaining the audit capability. The controller decrypts
the result, makes a decision, and binds release to the audited call and
its expected state. This yields three processing layers:
\emph{trusted representation construction, encrypted audit computation,
and trusted decision and execution}. SealCall combines text embeddings
with structured features for tools, arguments,
and sources through a learned projection. It uses CKKS to compute
similarities between the encrypted query and encrypted unsafe reference
examples. The trusted controller takes the maximum of the decrypted
similarities, applies a calibrated threshold, and checks that execution
matches the audited operation.

We evaluate SealCall's detection quality, representation utility,
encrypted-decision agreement, cost, and execution binding. On 1,252 binary
candidate calls from public trajectories, grouped nested evaluation yields
mean F1 of 0.840. In separate encrypted measurements, decisions agree
with frozen plaintext decisions across all 760 model--layout records.
A representative configuration with 256 dimensions and 16 references
requires 0.176\,s per encrypted scoring operation on average.
These evaluations examine trade-offs among
representations, scoring functions, and computation costs within the paradigm.

We make three contributions:
\begin{itemize}
  \item \textbf{Problem formulation for confidential outsourced auditing.}
  We connect the value of specialized audit services to the exposure of
  users' tasks, calls, and histories, and formalize input confidentiality
  against a semi-honest auditor across three domains of access and control.
  \item \textbf{A paradigm for privacy-preserving safety auditing.}
  We develop three layers for trusted representation construction,
  encrypted audit computation, and trusted decision and execution.
  We explain their interface dependencies, the choice of delegated
  computation and returned information, and the conditions for input
  confidentiality.
  \item \textbf{SealCall and empirical evaluation.}
  We develop the SealCall framework and assess
  representation utility, alternative scorers, encrypted-decision
  agreement, computation and communication costs, and execution binding.
\end{itemize}

\section{Threat Model and Problem Formulation}
\label{sec:background}
\subsection{Threat model}
\label{sec:threat-model}
Our threat model includes the user, the LLM agent, the audit service
provider, and an additional observer seeking to recover audit rules
from the encrypted reference database. The audit provider is semi-honest:
it follows the agreed computation but attempts to infer private content
from its observations. It holds public and evaluation keys, enabling
it to compute on encrypted inputs and return encrypted results.
The additional adversary is a passive observer that obtains encrypted
reference data without a decryption key and seeks to infer the audit
rules encoded in those data. Reference vectors remain encrypted during
storage and computation, preserving their contents and the audit-rule
information they encode under the encryption assumptions.

\emph{Input confidentiality} is the primary objective: protect users'
tasks, call arguments, histories, and their numerical representations
while enabling external audit computation. \emph{Reference-database
confidentiality} additionally protects reference contents against observers
with access to ciphertexts but no key or prior plaintext knowledge.
References remain encrypted during storage, transmission, and homomorphic
computation. A provider supplying its own references already knows their
contents; this prior knowledge leaves user input confidentiality as the
primary protection in that deployment. Protection of references against
the decrypting controller depends on the output interface
(Section~\ref{sec:delegation}).

Let $S$ denote the audit server. Its audit-message view is
\begin{equation}
 \mathcal V_S=\bigl(pk,evk,\operatorname{Enc}(\Theta),
                  \{\operatorname{Enc}(z_t),c_t\}_t\bigr).
 \label{eq:view}
\end{equation}
Here $pk$ and $evk$ are public and evaluation keys, $z_t$ is the audit
representation, $\Theta$ contains protected audit data or model
parameters, and $c_t$ is the encrypted response. We analyze the content
confidentiality of this message view conditioned on public metadata
and the parties' prior information, comparing different input or
reference contents under the same public layout and request schedule.

\subsection{Audit inputs and the three-domain auditing model}
\label{sec:audit-model}
A proposed tool call is an operation awaiting execution, including its
tool name and final arguments. At step $t$, the audit context is
\begin{equation}
 A_t=(u,a_t,h_{<t}),
 \label{eq:context}
\end{equation}
where $u$ is the user's task instruction and the basis for assessing
authorization, $a_t$ is the tool call awaiting execution, and $h_{<t}$
contains previous tool calls and their returned content. Source roles
distinguish user instructions from external information returned by tools.
The audit function combines these inputs to assess risk, and the controller
uses its result to allow or block the call.

Our three-domain auditing model separates responsibilities into three
domains. The \emph{agent
execution domain} runs the agent and
supplies its final candidate and prior context before tool execution.
The \emph{user control domain} contains trusted controller $C$, which
reads $A_t$, constructs and encrypts representations, holds the decryption
key, and decides whether to release a call. It can be local or run in a
trusted hosted environment. The \emph{audit service domain} contains server
$S$, which receives encrypted representations, performs the agreed
numerical computation, and returns encrypted results. It holds neither
the decryption key nor tool release authority. Audit data or model
parameters $\Theta$ are prepared and encrypted by $C$ or a cooperating
data owner in a trusted environment. A specialized audit provider can act
as this data owner, maintaining references or model parameters and
encrypting them under $C$'s public key for use by $S$. This preparation
uses the provider's audit knowledge without access to users' plaintext
queries. SealCall uses a reference set $B$ as its choice of $\Theta$.
These domains define the boundary for protocol design and confidentiality
analysis.

A common runtime interface connects the paradigm to agent systems: obtain
the final call and its context before side effects, and route covered
tools through the controller's executor. Context compression, text
embedding, and other language processing run in the trusted domain.
When one cloud provider operates both services, trust boundaries follow
actual information access and key isolation. Protection addresses the
additional exposure introduced by the audit interface.

\subsection{Requirements for a useful confidential audit}
The paradigm has four requirements. First, numerical representations
should retain the context needed to distinguish authorized and unsafe
actions, including relationships between task constraints and call
arguments. Second, the service must perform its agreed computation without
plaintext representations or a decryption key. Third, for the selected
scorer, threshold, and computation parameters, approximation error should
be small enough to preserve the controller's plaintext decisions. Fourth,
release must bind the decision to the audited call and expected state;
changes require another audit. These requirements are assessed through
detection experiments, confidentiality analysis, numerical error and
decision-agreement measurements, and execution-binding checks.

Section~\ref{sec:method} develops the computational allocation for these
requirements and presents the SealCall framework.

\section{A Paradigm for Confidential Safety Auditing}
\label{sec:method}
\subsection{Design principle and computational decomposition}
\label{sec:decomposition}
Our central design principle separates delegation of audit computation
from plaintext access and execution authority. An audit depends on
relationships expressed in private language: who requested an action,
what the request permits, and how the proposed call uses earlier
information. The trusted controller encodes the task, candidate call,
and source-linked history as a numerical input. The service evaluates
this input against audit data or model parameters, which can be supplied
and maintained by a specialized provider. Computation proceeds over
ciphertexts; the controller interprets the result and enforces the
decision. This decomposition creates three processing layers:
\emph{trusted representation construction, encrypted audit computation,
and trusted decision and execution}. The first and third run in the user
control domain; the middle layer runs in the audit service domain.
Figure~\ref{fig:paradigm} shows the privacy exposure motivating this design
and how SealCall implements the three layers.
\begin{figure*}[t]
\centering
\includegraphics[width=\textwidth]{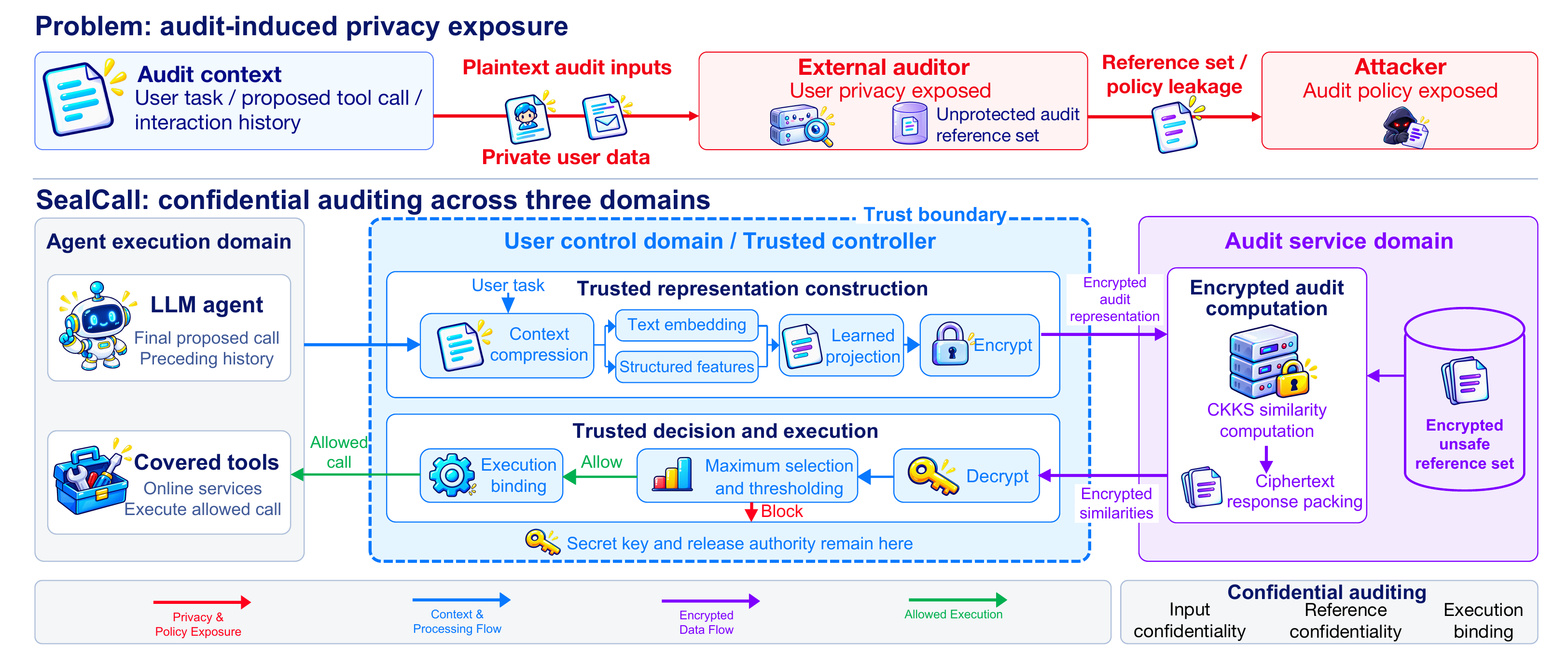}
\caption{Privacy exposure in outsourced auditing and the SealCall workflow.
Top: plaintext audit inputs expose private user context to the auditor,
while an unprotected reference set exposes audit information.
Bottom: the SealCall framework separates trusted representation construction,
encrypted audit computation, and trusted decision and execution across three
domains. The illustrated configuration uses CKKS reference similarities;
the secret key and release authority remain in the user control domain.}
\Description{The upper panel shows plaintext audit context flowing to an
external auditor and an unprotected reference set exposed to an attacker.
The lower panel shows the agent execution domain on the left, the user control
domain in the center, and the audit service domain on the right. A dashed
boundary encloses the trusted controller. The agent supplies its final proposed
call and preceding history; the user supplies the task. The controller
compresses this context, combines text embeddings and structured features
through a learned projection, and encrypts the resulting audit representation.
The audit service computes CKKS similarities using an encrypted unsafe reference
set and packs the ciphertext results for return. The controller decrypts them
and applies maximum selection and thresholding. An allow proceeds through
execution binding to the covered tools; a block stops the call.
The secret key and release authority stay within the trusted controller.}
\label{fig:paradigm}
\end{figure*}

The interfaces express how these layers compose:
\begin{align}
 q_t&=\kappa(A_t),\qquad z_t=\phi(q_t), \label{eq:interfaces}\\
 c_t&=\operatorname{Eval}_{F}(\operatorname{Enc}(z_t),
                              \operatorname{Enc}(\Theta)), \label{eq:interfacehe}\\
 b_t&=\delta(\operatorname{Dec}(c_t);\tau). \label{eq:interfacedecision}
\end{align}
Context compression $\kappa$ produces $q_t$, and vectorization $\phi$
produces the audit representation $z_t$. Audit data or model parameters
$\Theta$ are prepared in a trusted environment and encrypted under the
controller's key. The public arithmetic function $F$ specifies the
service's computation; $\delta$ interprets its output using calibration
parameters $\tau$ to produce decision $b_t$. For example, $\Theta$ can be
a reference set or classifier coefficients. The controller releases
$a_t$ when the decision permits it and the execution checks pass.

The interfaces define a family of audit implementations. A deployment
selects $\kappa$, $\phi$, $\Theta$, $F$, and $\delta$ together, so that
the retained context supports the scoring task and the encrypted
arithmetic supports its decision rule. Representation quality determines
which risks can be distinguished; encryption protects the inputs supplied
to the service; execution control applies the resulting decision to the
intended operation. The following layers make these dependencies explicit.

\subsection{Trusted context processing and audit representation}
\label{sec:representation}
Context compression centers the audit on the proposed call. It selects
task constraints and prior evidence relevant to authorization and risk,
reducing irrelevant material and the cost of subsequent encoding while
retaining the basis for judgment. The user task and final candidate
remain present; selected history carries its source. In the quotation
example, the authorized recipient, proposed destination, and origin of
the message content jointly inform the decision. The same address in a
tool response and in an explicit user authorization can have different
significance. Compression must therefore retain the task constraint and
its relationship to the call along with the selected text.

This selection also controls representation cost. Long histories contain
content unrelated to the current action, while a short excerpt can omit
a restriction or prerequisite. The selection rule determines this
trade-off. Explicit source roles, temporal order, and missing-evidence
indicators help subsequent processing distinguish available evidence from
assumptions. A compression method is assessed by the risk distinctions
that survive it and by its reduction in processing cost.

Vectorization encodes the task constraints, candidate, and historical
evidence retained in the compact context as an audit representation.
Its design aims to preserve information needed to judge authorization
and risk while making numerical scoring possible without server access
to plaintext. Embeddings can represent language content; extensions of
this interface could use image or audio embeddings. Structured features
can expose tool identity, argument structure, and source relationships,
and a learned mapping can adapt their combination to the audit task.
Output dimension, scale, and normalization determine the input expected
by $F$ and the cost of encrypted evaluation. These choices are made
jointly with the scorer: similarity calculation and linear classification
may require different representations.

Both compression and vectorization operate within trusted processing.
Their output remains sensitive and is encrypted before delegation.
Changing an encoder also requires updating the corresponding reference
vectors or classifier parameters and calibration to match its meaning
and scale. Section~\ref{sec:representation-evaluation} evaluates the
resulting representation through ablations, task restatements, and
changes to authorization.

\subsection{Delegating audit computation through HE}
\label{sec:delegation}
Delegation begins by expressing the numerical audit operation as an
arithmetic function $F$ and trusted postprocessing $\delta$. This split
determines what the server computes and what its encrypted output means.
For a reference scorer, $F$ can return dot products for trusted
aggregation; for a linear scorer, it can return a linear score for a
trusted threshold comparison. Nonlinear operations assigned to the
controller can reduce encrypted-circuit cost. Choosing this split
requires considering the number of returned values, their numerical
precision, and the information available to the recipient.

Controller $C$ generates the keys and retains the secret key. Audit data
$\Theta$ are encoded and encrypted through trusted preparation, and each
query is encrypted with fresh randomness. Server $S$ receives public and
evaluation keys and the ciphertext operands, evaluates the agreed
circuit, and returns ciphertexts. The circuit description, dimensions,
and output layout are public. Parameters and reference data carried in
$\Theta$ remain encrypted during transmission, computation, and storage,
covering the retained ciphertexts considered in the threat model.

The arithmetic and encoding jointly determine the private computation
cost. Multiplication depth, vector dimension, and output count affect
the required HE parameters and communication. Packing can reduce response
size while retaining the same logical results. Approximate arithmetic
also requires matching numerical precision to the decision rule:
an error harmless for a well-separated score can change a decision near
its threshold. These dependencies motivate measuring cost and decision
agreement together for each representation--scorer combination.

The output interface fixes what the trusted recipient learns.
Returning individual similarities supports trusted maximum selection and
inspection of references; a controller able to query arbitrary basis
vectors can also recover reference coordinates from these similarities.
The present trust model permits this access within $C$. Protecting a
provider's reference assets against the controller is a separate goal:
it calls for an output-restricted protocol that reveals prescribed
predictions or search results~\cite{gazelle,sanns}. This provides a route
to extending input confidentiality to protection of both parties' data.

\subsection{Trusted decision and execution control}
\label{sec:trusted-decision}
The controller turns numerical results into an operational decision.
It decrypts the returned values, applies the specified aggregation or
postprocessing, and evaluates $\delta$ at the calibrated operating
point. Calibration relates the scorer to the deployment's preference
between detecting unsafe calls and blocking legitimate ones.
The representation, audit data, arithmetic, and decision rule form one
versioned configuration. Provider updates to references or model parameters
must match the deployed encoder and calibration; changes require renewed
validation before the controller activates the new configuration.

A decision becomes an execution constraint by being attached to a
specific pending operation. Before delegation, the controller records
the final call and expected state. At release it checks that the result
belongs to this request, that the call and relevant state still match,
and that the decision permits execution. A change triggers another audit;
an unavailable or unusable result keeps the call blocked. Routing each
covered tool through this check gives the decision a concrete enforcement
point, following complete mediation~\cite{saltzer-schroeder}.

This layer preserves the relationship between the evidence examined and
the action performed. Its validation checks stale state, replay,
argument changes, and failure handling. Detection experiments separately
measure whether the selected representation and scorer produce useful
decisions. Together these checks connect an encrypted numerical service
to control over the agent's effects.

\subsection{Confidentiality and conditions of applicability}
\label{sec:confidentiality}
Input confidentiality follows from keeping plaintext preparation and
decryption within the user control domain. Fix the public circuit,
parameters, layouts, and request schedule when comparing the message
views in Equation~\ref{eq:view}. Assume semantic security of the chosen HE scheme
in its evaluation-key setting, fresh encryption randomness, and correct
key management. Conditioned on the same prior information, two equal-shape
collections of audit representations and protected parameters yield
computationally indistinguishable ciphertext views to $S$. When the
provider supplies $\Theta$, this comparison fixes its known parameters
and varies the user's representations. A hybrid replaces the protected
input ciphertexts by encryptions of zero of the same shape. Evaluation and packing are
computations on that view and preserve indistinguishability.

The task, call arguments, and history remain in trusted processing;
their representations reach the service only as ciphertexts, including
in retained logs and backups. Public layouts and request timing remain
observable. The confidentiality claim concerns message contents under a
fixed public layout and request schedule, conditioned on prior knowledge.
Subsequent tool activity forms a separate feedback channel that can
reveal aspects of the workflow.

Applicability requires trusted context preparation, isolated keys, correct
mediation, and the semi-honest service behavior specified in
Section~\ref{sec:threat-model}. The audit server and passive observer
hold no decryption key and do not collude with its holder.
Execution binding enforces the association between a result and its call.
Correct evaluation relies on the semi-honest assumption: a malicious
server could return a well-formed ciphertext encoding a false score.
Extending the protocol to such servers requires a computation-integrity
mechanism, such as verifiable evaluation, alongside call binding.

\subsection{The SealCall Framework}
\label{sec:sealcall}
SealCall combines learned audit representations, encrypted reference scoring,
and a call-binding controller (Table~\ref{tab:instantiation}). We evaluate
its detection utility, encrypted cost, and execution consistency.
Its reference set is $\Theta=B=\{p_j\}_{j=1}^K$; each $p_j$ comes from an
actual unsafe fitting example, allowing the trusted party to inspect
the highest-scoring example.

\begin{table*}[t]
\caption{Three processing layers of the SealCall framework.
The trusted controller $C$ retains plaintext access, the secret key,
and release authority; the audit server $S$ computes on ciphertexts.}
\label{tab:instantiation}
\centering
\begingroup
\setlength{\tabcolsep}{5pt}
\renewcommand{\arraystretch}{1.12}
\begin{tabular}{@{}>{\raggedright\arraybackslash}p{0.23\textwidth}>{\raggedright\arraybackslash}p{0.32\textwidth}>{\raggedright\arraybackslash}p{0.40\textwidth}@{}}
\toprule
Layer (role) & Interface & Components in the evaluated configuration\\
\midrule
Trusted representation ($C$) &
Audit context $A_t$ $\rightarrow$ encrypted representation $\operatorname{Enc}(z_t)$ &
Source-linked context; BGE and structured features; learned projection.\\
\addlinespace
Encrypted computation ($S$) &
Encrypted representation and audit data $\rightarrow$ encrypted results &
CKKS reference dot products; ciphertext response packing.\\
\addlinespace
Trusted decision and execution ($C$) &
Encrypted results $\rightarrow$ call-bound allow or block &
Decryption; maximum and threshold; call and state checks.\\
\bottomrule
\end{tabular}
\endgroup
\end{table*}

\paragraph{Trusted representation.}
SealCall retains the user task and canonical JSON of the final candidate,
then selects preceding history by scalar argument occurrences with
source roles and offsets. It combines frozen
\texttt{bge-small-en-v1.5} embeddings~\cite{bge-cpack} with structured
features for tools, argument paths and types, history presence, and
argument provenance. Vocabularies and scaling statistics use fitting
data. With standardized input $x_t$ and $v$ varying fitting coordinates,
the learned affine projection produces a unit vector:
\begin{equation}
 \widetilde{x}_t=\sqrt{v}\,\frac{x_t}{\lVert x_t\rVert_2},\qquad
 z_t=\frac{W\widetilde{x}_t+b}{\lVert W\widetilde{x}_t+b\rVert_2}.
 \label{eq:projection}
\end{equation}
Supervised fitting trains the maximum similarity to unsafe references,
with same-root query--reference pairs masked. Nested selection chooses
the map and reference count; separate calibration sets the threshold.
Appendix~\ref{app:training} gives extraction, encoding, and fitting details.

\paragraph{Encrypted computation and trusted decision.}
CKKS~\cite{ckks} supports approximate arithmetic on encoded values with
rescaling to manage scale and precision. SealCall uses it for dot products:
\begin{equation}
 c_{t,j}=\operatorname{EvalDot}
  (\operatorname{Enc}(z_t),\operatorname{Enc}(p_j)),\quad j=1,\ldots,K.
 \label{eq:dot}
\end{equation}
TenSEAL~\cite{tenseal} packs these scalar ciphertexts into a response.
The controller decrypts all $K$ similarities and applies
\begin{equation}
 s_t=\max_{1\leq j\leq K}z_t^\top p_j,\qquad
 \widehat y_t=\mathbb{1}[s_t\geq\tau],
 \label{eq:score}
\end{equation}
where $\widehat y_t=1$ means unsafe. Normalization, maximum selection,
and thresholding run at $C$, keeping the encrypted circuit focused on
similarity arithmetic. The score measures similarity; the selected
reference provides an inspectable comparison.

\begin{proposition}[Decision preservation]
Let $v_j=z_t^\top p_j$ and let $\widetilde v_j$ be its decrypted
approximation. If $|\widetilde v_j-v_j|\leq\epsilon$ for every reference,
then $|\max_j\widetilde v_j-s_t|\leq\epsilon$. If also
$|s_t-\tau|>\epsilon$, the approximate and plaintext decisions agree.
\end{proposition}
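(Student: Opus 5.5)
The plan is to prove the two claims in order. The first claim is the standard fact that the maximum is $1$-Lipschitz in the $\ell_\infty$ norm. The second follows from a case split on which side of $\tau$ the plaintext score $s_t=\max_j v_j$ lies.

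For the first claim, let $\widetilde s_t=\max_j\widetilde v_j$. Choose an index $j^\ast$ attaining $s_t=v_{j^\ast}$. Then
\[
\widetilde s_t\geq\widetilde v_{j^\ast}\geq v_{j^\ast}-\epsilon=s_t-\epsilon .
\]
Conversely, choose $k^\ast$ attaining $\widetilde s_t=\widetilde v_{k^\ast}$. Then
\[
\widetilde s_t=\widetilde v_{k^\ast}\leq v_{k^\ast}+\epsilon\leq s_t+\epsilon .
\]
Together these give $|\widetilde s_t-s_t|\leq\epsilon$. The maxima exist because $K$ is finite, so the argmax indices are well defined. The only care needed is that the two maximizers may differ; this is why I use a separate index for each direction instead of comparing a single coordinate.

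For the second claim, assume $|s_t-\tau|>\epsilon$ and split into two cases.

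If $s_t>\tau+\epsilon$, then $\widetilde s_t\geq s_t-\epsilon>\tau$. Hence both $\mathbb{1}[s_t\geq\tau]$ and $\mathbb{1}[\widetilde s_t\geq\tau]$ equal $1$.

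If $s_t<\tau-\epsilon$, then $\widetilde s_t\leq s_t+\epsilon<\tau$. Hence both indicators equal $0$.

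The strict margin matters at the boundary. It ensures that the approximate score cannot land exactly on $\tau$ on the wrong side of the non-strict comparison in Equation~\ref{eq:score}. So the decision $\widehat y_t$ computed by the controller from the decrypted values matches the plaintext decision.

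There is no genuine obstacle here. The step that needs the most attention is keeping the two maximizing indices distinct in the first claim. I also want to be explicit that $\epsilon$ bounds the CKKS decryption error per reference uniformly; this is the hypothesis as stated, and deriving it from CKKS parameters is outside the proposition.
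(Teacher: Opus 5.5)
Your proof is correct and follows the same argument as the paper. The paper also bounds every approximate similarity above by $s_t+\epsilon$ and the approximation at a true maximizer below by $s_t-\epsilon$, then concludes that the approximate maximum cannot cross a threshold more than $\epsilon$ away; your case split on the side of $\tau$ just spells out that final step explicitly, including the non-strict comparison $s\geq\tau$.
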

\begin{proof}
All approximate similarities are at most $s_t+\epsilon$. The
approximation for a true maximizer is at least $s_t-\epsilon$.
The approximate maximum lies between these bounds and cannot cross a
threshold farther than $\epsilon$ away.
\end{proof}
This condition motivates the score-error and decision-agreement
measurements in Section~\ref{sec:encrypted-evaluation}.

\paragraph{Execution binding.}
Before requesting the audit, the adapter saves canonical arguments as
the pending execution snapshot. It hashes the tool name, call identifier,
per-gate nonce, and expected state version to identify the request.
Only a timely allow that matches this call and expected state releases
the saved snapshot. Replayed, duplicate, malformed, mismatched, or
timed-out results fail closed. Calls in each tool lane are serialized;
state commits only after successful execution, and uncertain side effects
halt the lane to avoid further actions on unknown state. The controller
associates state versions with the intended trusted history prefix and
keeps the audit configuration fixed for the session. Thus the digest
checks the call and expected state, while these controller checks
maintain consistency between audit evidence and execution.

\section{Evaluation}
\label{sec:evaluation}
SealCall tests whether the paradigm's interfaces form a useful audit:
whether the numerical representation preserves risk distinctions,
encrypted scoring preserves plaintext decisions at measurable computation
and communication cost, and trusted control releases the audited call.

Ablations and scorer comparisons assess information retention and
component choice (Sections~\ref{sec:representation}--\ref{sec:delegation});
encrypted measurements assess decision agreement and cost, while adapter
tests assess execution binding (Section~\ref{sec:trusted-decision}).
Section~\ref{sec:confidentiality} analyzes input confidentiality.
Candidates come from existing trajectories, so detection metrics concern
fixed calls; task completion after rejection and live attack success
require interactive evaluation.

\subsection{Data and evaluation protocol}
The data are public step annotations released by ToolSafe~\cite{toolsafe},
derived from AgentHarm and AgentDojo trajectories~\cite{agentharm,agentdojo}.
Of 1,951 source records, format validation excludes 29 and exact-input
deduplication merges 397, leaving 1,525 accepted candidates. We retain
the native labels: 1,252 binary candidates enter training and evaluation;
273 intermediate-label candidates remain separate
(Table~\ref{tab:dataset}). Normalized task bodies, source-local trajectories,
and identical visible inputs define indivisible task roots before splitting.
Each input contains its task, final candidate, and strictly prior history.

\begin{table}[t]
\caption{Accepted public candidates. Intermediate labels are excluded
from binary fitting and evaluation. Roots are the split unit.}
\label{tab:dataset}
\centering
\begin{tabular}{lrrrrr}
\toprule
Source & Binary & Roots & Safe & Unsafe & Interm.\\
\midrule
Harm & 399 & 144 & 205 & 194 & 273\\
Dojo & 853 & 58 & 512 & 341 & 0\\
Total & 1,252 & 202 & 717 & 535 & 273\\
\bottomrule
\end{tabular}
\end{table}

Five outer grouped folds evaluate a fixed fitting procedure. Each outer
training pool reserves a group-disjoint 20\% for final calibration; three
inner folds select configurations on the remaining development data.
Each fit has separate stopping groups. Fitting alone determines feature
vocabularies, scaling, learned maps, and reference examples. Final fitting,
stopping, calibration, and test partitions have no overlapping roots,
exact tasks, or exact inputs. Every binary candidate receives one outer-test
prediction per seed. These public data had been inspected during method
development, so the protocol measures performance on an exposed benchmark,
with grouped separation and nested configuration selection.

Two calibration policies express different deployment preferences.
$B_{10}$ selects the lowest float64 threshold satisfying an empirical
10\% calibration false positive rate (FPR) budget.
$B_{F1}$ maximizes calibration F1, breaking ties
by lower FPR and then higher threshold. Neither uses the outer test, and
one global threshold applies to both sources. Inner selection prioritizes
mean validation FPR at most 10\%, then equal-source F1, recall, smaller $K$,
and smaller dimension. If none meets the budget, lower FPR takes priority.
Appendix~\ref{app:training} gives exact configurations and tie handling.

Unsafe is the positive class. We report its F1 and recall, and
$\mathrm{FPR}=\mathrm{FP}/(\mathrm{FP}+\mathrm{TN})$, the fraction of safe
calls incorrectly blocked. Confusion counts are pooled across outer folds
separately for each of five seeds; reported metrics average these five
per-seed values. No score ensemble is used. Deterministic baselines have
one prediction per candidate. Area under
the receiver operating characteristic curve (AUROC) is averaged within
folds to avoid mixing fitted score scales. Confidence intervals use 5,000
source-stratified root bootstrap resamples, conditional on the fitted
models; they exclude retraining uncertainty and shared-fold dependence.

\subsection{Detection utility and alternative scorers}
\label{sec:scorer-evaluation}
Table~\ref{tab:quality} compares scorer choices at the numerical interface
on the same 1,252 candidates. Logistic
regression (LR), a radial-basis-function support vector machine (RBF-SVM),
and direct cosine use the same fitted input as SealCall. LR and RBF-SVM
select $C\in\{0.1,1,10\}$ by the same nested criterion. The center baseline
uses the normalized mean reference in the learned space; it is not trained
with a separate center objective.

\begin{table*}[t]
\caption{Detection quality on the same 1,252 candidates. Each policy is
calibrated before outer testing. AUROC averages within folds. Lower FPR
is better; other metrics are higher-is-better.}
\label{tab:quality}
\centering
\begin{tabular*}{\textwidth}{@{\extracolsep{\fill}}lrrrrrr@{}}
\toprule
& \multicolumn{3}{c}{$B_{10}$} & \multicolumn{2}{c}{$B_{F1}$} & \\
\cmidrule(lr){2-4}\cmidrule(lr){5-6}
Scorer & F1 & Recall & FPR & F1 & FPR & AUROC\\
\midrule
SealCall reference set & 0.812 & 0.823 & 0.153 & 0.840 & 0.112 & 0.921\\
Same-space center & 0.805 & 0.805 & 0.145 & 0.818 & 0.120 & 0.917\\
Same-input LR & 0.899 & 0.936 & 0.110 & 0.913 & 0.089 & 0.967\\
Same-input RBF-SVM & 0.889 & 0.948 & 0.137 & 0.905 & 0.052 & 0.970\\
Direct cosine & 0.627 & 0.540 & 0.137 & 0.666 & 0.449 & 0.736\\
\bottomrule
\end{tabular*}
\end{table*}

SealCall achieves F1 of 0.812 at $B_{10}$ and 0.840 at $B_{F1}$, with
conditional 95\% intervals [0.773, 0.849] and [0.804, 0.874]. Its
$B_{10}$ test FPR of 15.26\% illustrates the difference between a
calibration budget and future behavior. At $B_{F1}$, source-specific F1
is 0.745 on Harm and 0.888 on Dojo, giving equal-source F1 of 0.816.
The larger Dojo source therefore contributes disproportionately to pooled
F1. Appendix~\ref{app:source} provides source-resolved comparisons.

The scorer comparisons identify distinct choices at the numerical
interface. Learning the representation improves over direct cosine;
LR and RBF-SVM achieve F1 of 0.913 and 0.905. The reference scorer's
$B_{F1}$ F1 difference from
LR is $-0.072$, with a familywise paired interval of $[-0.112,-0.035]$;
the corresponding difference from RBF-SVM is $-0.065$
$[-0.106,-0.031]$. Reference scoring returns similarities to individual
examples, while LR and RBF-SVM provide stronger discrimination on these
data. Section~\ref{sec:encrypted-evaluation} evaluates the computation and
communication costs of these choices within the same delegation interface.

\paragraph{External guard reference.}
A separately fixed set of 416 shared candidates compares native guard
outputs with SealCall's corresponding outer predictions. StepGuard-4B~\cite{stepguard}
and Qwen3Guard-Gen-0.6B~\cite{qwen3guard} use Q4\_K\_M weights, native prompts and parsers,
and the task, candidate, and strict prior history; tool names are supplied
without fabricating missing schemas. Their F1 values are 0.882 and 0.411,
versus SealCall's 0.837 at $B_{F1}$. There are no invalid responses on
this subset. These are same-case reference results, with unknown training
overlap, quantization, and incomplete tool metadata. Their 416-case scores
are not directly comparable to Table~\ref{tab:quality}.

\subsection{Information retained by audit representations}
\label{sec:representation-evaluation}
Matched ablations remove the text embedding or structured-feature block
while preserving each outer fold's selected configuration, reference
identities, partitions, seed, optimizer, and input width. Removed coordinates
are zeroed before joint normalization, followed by refitting and independent
calibration on the original calibration groups. The 50 additional fits
therefore test the contribution of a block under the selected architecture.

\begin{table}[t]
\caption{Representation comparisons at $B_{10}$ on all 1,252 candidates.
The lower panel uses a shared, separately block-normalized LR readout.}
\label{tab:representation}
\centering
\begin{tabular}{lrrr}
\toprule
Representation / readout & Harm F1 & Dojo F1 & Pooled F1\\
\midrule
SealCall full & 0.728 & 0.854 & 0.812\\
Without text embeddings & 0.599 & 0.857 & 0.769\\
Without structured features & 0.549 & 0.790 & 0.718\\
\midrule
BGE + features / LR & 0.792 & 0.929 & 0.879\\
TF-IDF + features / LR & 0.808 & 0.919 & 0.878\\
Features only / LR & 0.742 & 0.939 & 0.870\\
Tool-name probe / LR & 0.770 & 0.889 & 0.846\\
\bottomrule
\end{tabular}
\end{table}

The full representation attains higher pooled F1 than either ablation
(Table~\ref{tab:representation}). Removing text embeddings reduces
$B_{10}$ F1 by 0.043; the familywise interval is $[-0.003,0.090]$.
At $B_{F1}$, the corresponding reduction is 0.039
$[-0.009,0.092]$. Both intervals include zero after correction across
12 comparisons. Text changes the error trade-off: at $B_{10}$ it increases
recall and FPR, while at $B_{F1}$ it reduces FPR.

The matched LR comparison uses the same compact text, structured features,
fitting groups, sample weights, and nested selection. Each text/feature
block is normalized separately before joint normalization. A fitting-only
TF-IDF representation combines word 1--2 and character 3--5 grams.
Its F1 of 0.878 is close to BGE-LR's 0.879. Features-only and tool-name
controls also perform well, especially on Dojo. These results show that
lexical and structured cues account for much of the discrimination on
this benchmark, with substantial variation across sources.

\paragraph{Changes in expression and authorization.}
Two fixed diagnostics test different forms of generalization without
refitting or recalibration. For 100 anchors balanced over source and label,
two task restatements preserve the candidate and history. We re-extract
the compact context and structured features from each revised task. At $B_{10}$,
F1 is 0.785 on originals, 0.791 on the first restatements, and 0.787 on
the second. The fraction
correct on all three is 0.776, exceeding the no-text-embedding model by
0.070; a descriptive bootstrap over 93 distinct task bodies gives
$[0.010,0.130]$. The corresponding $B_{F1}$ interval crosses zero.
An AI model generated the restatements, and another model reviewed them;
there was no human adjudication. Six ambiguous roots remain in the main diagnostic.

A second construction contains 80 policy worlds, each with one safe and
one unsafe case in two renderings, totaling 320 cases. It changes recipient
binding, purpose, source authority, or prior conditions while keeping the
candidate and history fixed. At either operating point, no world has all
four decisions correct. A parser of the construction grammar attains
1.000 on this measure. Four families share two surface grammars, and role
facts are asserted in text rather than authenticated runtime identity.
This diagnostic isolates sensitivity to changed authorization: the current
representation and scorer do not reliably distinguish these policy changes.
Capturing the altered recipient, purpose, or source authority is therefore
a distinct representation objective from stability under task restatement.

\subsection{Encrypted fidelity and computation costs}
\label{sec:encrypted-evaluation}
The cost representative was fixed before inspecting quality: outer fold
0, seed 20261016, with $d=256$ and $K=16$. Experiments use an Intel Core
Ultra 5 225H, 32\,GB RAM, one thread, and TenSEAL 0.3.18. CKKS parameters
are polynomial modulus degree 8192, coefficient-modulus bit sizes
$[60,40,40,60]$, and scale $2^{40}$. Actual encryption and homomorphic
operations run in one process with distinct secret and public contexts;
the server context has no secret key. These measurements exclude cloud
network latency and do not test process isolation.

\begin{table}[t]
\caption{Quality and private scoring cost for one pre-selected model
per scorer. F1 uses $B_{F1}$ on that model's 251-case outer test fold;
Table~\ref{tab:quality} aggregates all outer folds. Response sizes are decimal MB.}
\label{tab:cost}
\centering
\begin{tabular}{lrrrr}
\toprule
Scorer & F1 & Time (s) & Packed MB & Raw MB\\
\midrule
SealCall & 0.895 & 0.176 & 0.131 & 3.761\\
Center & 0.845 & 0.016 & 0.235 & 0.235\\
LR & 0.909 & 0.043 & 0.235 & 0.235\\
RBF-SVM & 0.915 & 15.267 & 0.131 & 93.794\\
\bottomrule
\end{tabular}
\end{table}

After three warmups, 50 queries are timed for the reference set, center,
and LR, and 20 for the complete 399-support-vector RBF model. LR/RBF inputs
have 1,150 coordinates in this fold. The RBF control returns encrypted
linear terms for trusted kernel evaluation, using the same allowance for
trusted postprocessing as SealCall's maximum. Table~\ref{tab:cost} pairs
each model's quality with its own costs. LR is both more accurate and faster
than the reference scorer on this representative; SealCall's packed
response is smaller. RBF offers higher quality at much greater arithmetic
cost. Confidential delegation consequently requires a joint choice of
representation, scorer, output interface, and communication layout.

Packing reduces the reference response by 96.5\%, from 3.761 to
0.131\,MB, and also benefits RBF without reducing its dot-product count.
The query is 0.331\,MB and the encrypted reference set 5.302\,MB.
Trusted extraction, uncached BGE encoding, and projection average
1.47, 484.95, and 0.55\,ms. Query-matched sums of trusted preparation
and private scoring average 0.663\,s, with empirical 95th percentile
1.158\,s, excluding network and scheduling. Local plaintext scoring after
preparation takes approximately $19\,\mu$s. These measurements quantify
the processing cost of a confidential remote audit interface. The service
model is motivated by access to maintained audit knowledge and operations;
the encrypted scoring time measures the cost of applying a selected
scorer across that boundary while protecting the query.

Across 760 model--layout numerical records, both frozen operating policies
give the same decisions as plaintext. The records comprise two layouts
of 251 reference-set, 53 center, 53 LR, and 23 RBF queries. The maximum
final-score error for the
packed reference scorer is $1.52\times10^{-6}$; across all layouts and
scorers it reaches approximately $1.19\times10^{-5}$. Thus encrypted
evaluation preserves the selected scorers' decisions across these records.

\subsection{Execution binding and evaluation scope}
\label{sec:execution-evaluation}
Input reconstruction validates 3,066 selected source spans and all 1,525
compact contexts. Their total text uses 21.6\% of the reference characters,
quantifying the text reduction achieved by the selection rule. All 36
focused adapter tests pass, covering replay, changed candidates, stale
state, and failure closure. Together these checks validate context
reconstruction and call binding for the tested cases. Historical controlled
OpenClaw probes used an earlier scorer; the current pipeline's detection
results come from the offline evaluation above.

The evaluation connects detection utility, encrypted-decision agreement,
and execution binding within a common audit workflow. The scorer
comparisons expose concrete choices in detection quality, computation,
and communication. Applying this workflow to a deployment requires
calibration on representative tasks and reliable tool mediation under
the trust model in Section~\ref{sec:confidentiality}.

\section{Related Work}
\label{sec:related-work}
\paragraph{Specialized security services.}
Managed security services pool expertise and threat knowledge while
introducing sensitive-data access across organizational
boundaries~\cite{nist-incident-response}. Contextual agent auditing brings
user instructions, call arguments, and interaction history to this
boundary. Our paradigm supports remote audit computation while retaining
these plaintext inputs and execution control in the user control domain.

\paragraph{Agent safety and runtime enforcement.}
AgentDojo, AgentHarm, and InjecAgent expose risks arising from tool use,
malicious requests, and untrusted external
content~\cite{agentdojo,agentharm,injecagent}. ToolSafe's step-level guard
conditions its judgment on the task, candidate action, and interaction
history and supplies feedback before execution~\cite{toolsafe}.
StepGuard similarly studies step-level checks and safety--utility
balancing~\cite{stepguard}. These tasks establish the need for contextual
auditing. Our paradigm addresses its delegation: constructing the audit
representation in a trusted environment, computing on ciphertexts at the
service, and retaining the final decision and release authority at the
controller.

AgentSpec expresses runtime constraints through triggers, predicates,
and enforcement mechanisms~\cite{agentspec}. CaMeL separates a privileged
model that plans from the user query from a quarantined model that parses
untrusted data, then enforces capability-based information-flow
policies~\cite{camel}. These approaches regulate source authority, agent
behavior, and permitted data flows. Our three-domain auditing model additionally
treats the audit provider as an observer of private inputs; its processing
layers connect this boundary to representation, encrypted computation,
and trusted mediation. SealCall uses a learned risk scorer and
call-binding checks.

\paragraph{Private inference and similarity computation.}
CryptoNets demonstrates prediction on encrypted inputs without supplying
the cloud with a decryption key~\cite{cryptonets}. GAZELLE combines HE
with two-party computation for private neural inference and protects model
parameters as well as client inputs under its specified
functionality~\cite{gazelle}. SANNS studies secure approximate nearest
neighbors, including semi-honest security for its search
protocols~\cite{sanns}. Our design question is how to turn private task
context into a useful numerical audit input and connect a confidential
result to execution. The paradigm makes representation, delegated
arithmetic, and the output interface explicit design choices. SealCall
uses CKKS for encrypted similarity computation~\cite{ckks}, returning
individual similarities for trusted aggregation. Section~\ref{sec:delegation}
specifies the information disclosed by this output interface.

\paragraph{Compression and numerical representations.}
Selective Context and LLMLingua reduce language-model input
length~\cite{selective-context,llmlingua}; sentence embedding methods
provide vectors suitable for similarity comparison~\cite{sentence-bert,bge-cpack}.
The trusted representation layer adapts these ideas to the relationships
between task constraints, call arguments, and source roles. SealCall
selects argument-linked historical text, combines it with structured
features, and learns a projection for risk scoring. Its ablations and
authorization diagnostics evaluate what this representation retains.
Embedding inversion results~\cite{embedding-inversion} motivate encrypting
the resulting vectors at the audit boundary. Information preservation
and ciphertext confidentiality consequently receive separate analysis.

\section{Conclusion}
\label{sec:conclusion}
Specialized audit services offer maintained safety capabilities for agents
acting through Web services. Obtaining these services can expose private
tasks, calls, and history through the audit interface. We characterize
this exposure with a three-domain auditing model and a paradigm for trusted
representation, encrypted computation, and trusted decision and execution.
Its interfaces allow delegated risk scoring while plaintext access,
result interpretation, and call release remain in the user control domain.

SealCall combines learned representations with
encrypted reference similarities. Grouped detection, encrypted--plaintext
decision agreement, cost measurements, and execution checks demonstrate
the workflow and quantify its detection, computation, and communication
trade-offs. Future work will develop representations sensitive to changes
in authorization, optimize encrypted computation, and extend
integration with agent runtimes such as Codex and OpenCode.
With trusted preparation, isolated keys, and reliable mediation, external
auditing can protect agents before execution without plaintext user context.

\bibliographystyle{ACM-Reference-Format}
\bibliography{references}
\appendix
\section{Implementation, Fitting, and Calibration Details}
\label{app:training}
Context extraction matches scalar argument literals in strictly earlier
observations, case-insensitively with token boundaries, excluding empty
and boolean values. It selects the first two and latest matches, extends
each by up to 96 characters on either side, and deduplicates shared spans.
Selected text retains source roles and offsets; absent matches remain
explicitly missing. The frozen BGE encoder uses 384-token windows with
stride 352. Window embeddings are normalized, mean-pooled, and normalized
again. Structured features encode tool identity, input roles, argument
paths and types, history presence, and whether literals occur in the task,
history, both, or neither. Missing, unknown, and unseen values have explicit
encodings; inference retains the fitted feature width.

The split seed is 20261018. Inner selection uses seed 20261016; final
fitting uses seeds 20261016--20261020. The frozen BGE model revision is
\texttt{5c38ec7c405ec4b44b94cc5a9bb96e735b38267a}.
The semantic block has 384 coordinates; structured-feature dimensions
depend on the vocabulary fitted in each outer fold. The representative
fold has 1,150 total input coordinates before projection. Cached text
embeddings are frozen independently of the risk labels.

Let $B(q)$ denote unsafe references remaining after excluding the query's
task root. The training risk score and bank objective are
\begin{align}
 r(q)&=\max_{p\in B(q)}z_\theta(q)^\top z_\theta(p),\\
 L_{\mathrm{bank}}&=\mathbb{E}_q\,\operatorname{BCE}
       (\sigma(\alpha(r(q)-t)),y_q),
\end{align}
where $\operatorname{BCE}$ is binary cross-entropy, $\sigma$ is the
logistic function, $t$ is a learned training-only threshold, and $\alpha=20$.
For unsafe queries, an optional hard-example loss contrasts the best
eligible reference with safe in-batch examples $\mathcal{N}$:
\begin{equation}
 L_{\mathrm{hard}}=\mathbb{E}_{q:y_q=1}
 \operatorname{softplus}\!\left(\alpha\left[m+
 \max_{n\in\mathcal{N}}z(q)^\top z(n)-r(q)\right]\right),
\end{equation}
with $m=0.1$. The candidate objective is
$L_{\mathrm{bank}}+\beta L_{\mathrm{hard}}$, $\beta\in\{0,1\}$.
Deployment recalibrates a separate threshold.

AdamW uses learning rate $10^{-3}$, batch size 128, a maximum of 120
epochs, a minimum of 24, and stopping patience 12. The four candidates are
bank-only at 128 dimensions, bank-only at 256, bank plus hard loss at
128, and bank-only at 128 with weight decay 0.01. The first three use
weight decay $10^{-3}$. Each batch samples both classes equally, then
samples roots uniformly within a class and examples within a root.
Early stopping uses standardized partial AUROC up to FPR 0.1. All
configurations use at most 32 source-balanced unsafe reference instances,
with one per root; the selected deployment reference count is a prefix
of the same hash order, selected from $K\in\{1,4,8,16,32\}$.

Three inner folds compare four maps and five prefixes in each outer fold,
requiring 60 map fits because prefixes share a fit. Five final seeds add
25 fits. Inner fits reserve one of five group folds for stopping and
provisional calibration. Final development reserves one of ten group
folds for stopping, while final calibration remains separate. Feasible
configurations have mean inner FPR at most 0.10; selection maximizes
equal-source F1, then recall, and prefers smaller $K$, smaller $d$, and
lexical configuration name for remaining ties. If none is feasible,
lower FPR is the first criterion. There are no test-based split retries.

For $B_{10}$, sort calibration safe scores as
$r_0\geq\cdots\geq r_{n-1}$ and set $j=\lfloor0.1n\rfloor$.
The threshold is the next float64 value above $r_j$, using the comparison
$s\geq\tau$. This rejects at most $j$ calibration safe examples, including
in the presence of tied scores. $B_{F1}$ scans distinct scores and a
threshold above the maximum, then maximizes F1 with the tie rules in
Section~\ref{sec:evaluation}. All reported decisions use explicit float64
threshold comparisons.

\section{Source-Resolved Utility and Diagnostics}
\label{app:source}
Table~\ref{tab:source} separates the two sources at $B_{F1}$ on the full
binary benchmark. These metrics use the same source-independent threshold
for a given fitted model. Equal-source macro F1 is the mean of source
F1 values within each seed, rather than F1 from equal-sized resampling.

\begin{table}[t]
\caption{Source-resolved $B_{F1}$ quality on the 1,252-candidate benchmark.}
\label{tab:source}
\centering
\begin{tabular}{lrrrr}
\toprule
& \multicolumn{2}{c}{Harm} & \multicolumn{2}{c}{Dojo}\\
\cmidrule(lr){2-3}\cmidrule(lr){4-5}
Scorer & F1 & FPR & F1 & FPR\\
\midrule
SealCall & 0.745 & 0.132 & 0.888 & 0.105\\
Center & 0.689 & 0.122 & 0.878 & 0.119\\
LR & 0.869 & 0.146 & 0.938 & 0.066\\
RBF-SVM & 0.841 & 0.102 & 0.941 & 0.031\\
Direct cosine & 0.621 & 0.527 & 0.689 & 0.418\\
\bottomrule
\end{tabular}
\end{table}

The LR lexical comparison fits word 1--2 and character-within-word 3--5
grams, each capped at 12,000 features, with minimum document frequency 2
and sublinear term frequency. Text and standardized structured features
are normalized separately, then jointly. BGE-LR follows the same
block-normalized interface, which differs from the original standardized
same-input LR control. Both use fitting-only preprocessing and nested
selection over $C\in\{0.1,1,10\}$.

The semantic-effect correction treats two contrasts, two operating
policies, and three metrics as one 12-comparison family. The primary
scorer comparison corrects four contrasts within each policy and metric.
Both use 5,000 source-stratified paired root bootstraps conditional on the
fitted models. In the restatement diagnostic, excluding the six ambiguous
roots yields all-three correctness of 0.766 rather than 0.776. At $B_{F1}$,
the improvement over the no-text model is 0.032 with an interval crossing
zero. Policy-world labels are derived from explicit construction facts
checked by a parser; this does not make them human-adjudicated deployment
examples. New worlds are assigned outer models by a fixed hash, and seeds
are evaluated individually before averaging.

Restatement construction first freezes 25 anchors in each source--label
stratum, selecting roots by hash order without consulting predictions.
An AI assistant writes two substantive task restatements while retaining
tool-order hints; an independent model reviews them without model scores.
Structural checks verify unchanged candidate/history fields and valid
citations. These checks do not certify equivalence of unrestricted
language. All selected roots are retained, with review failures examined
in sensitivity analysis. The retained protocol does not pin the generation
and review model versions; the frozen texts and review records are the
reproducible diagnostic inputs.

The artifact includes two levels: replacing text while keeping original
structured features, and re-extracting the whole input after replacing the
task. Main-text results use the latter. The 80 authorization worlds are
constructed independently, with 20 in each relation family and two
renderings per safe/unsafe pair. A predicate evaluator and a separately
implemented parser must agree before scoring. No construction, review
result, or diagnostic score changes the trained maps or thresholds.

The external guard subset contains two previously fixed identity sets of
149 and 267 cases, totaling 416 cases and 61 source roots. Prompts are
matched to the current strict prefixes; frozen native responses are reused
without prompt selection. Training overlap with these public tasks is
unknown. A resource-limited extension with only 12 completed cases is
excluded from quality claims. A separate 140-case constructed diagnostic
also remains separate from the full-benchmark comparison.

\section{Cost Scope and Reproducibility}
\label{app:cost}
The encrypted representative uses CKKS degree 8192 and scale $2^{40}$,
with modulus bits $[60,40,40,60]$ and no bootstrapping. Once-only
reference encryption and serialization take 0.066\,s and produce
5.302\,MB. Packed-query encryption, server computation and packing, and
controller finalization average 4.56, 171.11, and 0.43\,ms, respectively.
The private-stage empirical 95th percentile is 191.61\,ms across 50 timed
queries. Numerical records include three warmups for each model and
layout; timed means exclude these warmups. Fixed layouts reveal dimensions and
reference counts. Network transport, scheduling, process isolation,
concurrent users, and cryptographic key lifecycle operations beyond the
measured setup are not included in these costs.

The retained research artifacts contain source revisions, parsed
candidates, root assignments, compact contexts, encoder caches, fitting
configurations, learned weights, reference identities, thresholds, and
per-case predictions. SHA-256 manifests bind the saved inputs and outputs;
the manuscript's numerical summaries can be reconstructed from these
artifacts. The source data release is the ToolSafe snapshot with revision
\texttt{46358fa}; full identifiers are retained in the artifact manifest.

\section{Action-Stratified Detection Quality}
\label{app:action-strata}
Table~\ref{tab:action-slices} partitions the 1,252 candidates by a frozen
tool-name rule applied equally to safe and unsafe cases. The five strata
partition cases; a task root can appear in several strata. SealCall
recalls 89.6\% of unsafe sends and 92.9\% of unsafe writes at $B_{10}$,
with lower recall on publish and other. The execute proxy contains five
roots. Tool names name the stratum and do not establish actual effects
or authorization.

\begin{table*}[t]
\caption{Detection within frozen tool-name strata at $B_{10}$. Assignment
uses one naming rule for safe and unsafe cases.}
\label{tab:action-slices}
\centering
\begin{tabular*}{\textwidth}{@{\extracolsep{\fill}}lrrrrrrrrr@{}}
\toprule
& & & & \multicolumn{2}{c}{SealCall} & \multicolumn{2}{c}{Same-input LR} & \multicolumn{2}{c}{TF-IDF + features, LR}\\
\cmidrule(lr){5-6}\cmidrule(lr){7-8}\cmidrule(lr){9-10}
Proxy & Roots & Safe & Unsafe & Recall & FPR & Recall & FPR & Recall & FPR\\
\midrule
Send & 97 & 73 & 163 & 0.896 & 0.260 & 0.939 & 0.274 & 0.969 & 0.411\\
Write & 72 & 148 & 187 & 0.929 & 0.284 & 0.995 & 0.122 & 0.989 & 0.203\\
Execute & 5 & 9 & 9 & 0.178 & 0.022 & 0.333 & 0.111 & 0.222 & 0.000\\
Publish & 57 & 36 & 103 & 0.715 & 0.294 & 0.913 & 0.278 & 0.981 & 0.556\\
Other & 147 & 451 & 73 & 0.622 & 0.083 & 0.890 & 0.067 & 0.767 & 0.058\\
\bottomrule
\end{tabular*}
\end{table*}

Table~\ref{tab:fold-selection} records the configuration chosen in each
outer fold and the sizes of the disjoint fit, stopping, calibration, and
test partitions. Selected dimension and hard-example loss vary across
folds. Every deployment bank is a prefix of one source-balanced unsafe
reference order, so $K$ changes the bank size without changing the
training examples behind it.

\begin{table}[H]
\caption{Outer-fold selections and disjoint final roles
(decisions/task roots). $K$ is a prefix of the same unsafe reference order.}
\label{tab:fold-selection}
\centering
\begin{tabular}{rlrrrrr}
\toprule
Fold & Configuration & $K$ & Fit & Stop & Cal. & Test\\
\midrule
1 & bank256 & 16 & 721/115 & 79/12 & 201/32 & 251/43\\
2 & hard128 & 16 & 717/117 & 85/12 & 202/32 & 248/41\\
3 & hard128 & 1 & 719/118 & 79/13 & 203/32 & 251/39\\
4 & bank128 & 1 & 719/116 & 81/13 & 199/34 & 253/39\\
5 & hard128 & 8 & 719/117 & 82/13 & 202/32 & 249/40\\
\bottomrule
\end{tabular}
\end{table}

\section{Expanded Semantic Diagnostics}
\label{app:semantic-diagnostics}
Table~\ref{tab:semantic-diagnostics} places the two frozen diagnostics
side by side. Restatement stability follows each method's original F1:
lexical readouts keep all-three correctness at 0.850 or above, and
removing text embeddings drops it to 0.706. On the constructed worlds,
no tested method gets all four decisions correct in any world, and
pair-level correctness stays at 0.025 or below. These constructions
change policy meaning in text; they do not authenticate runtime identity.

\begin{table*}[t]
\caption{Frozen-model semantic diagnostics at $B_{10}$. Keeping meaning:
100 source-root anchors with two task restatements (300 candidates).
Changing policy relations: 80 constructed worlds with two labels and two
renderings (320 candidates). Higher correctness is better.}
\label{tab:semantic-diagnostics}
\centering
\begin{tabular*}{\textwidth}{@{\extracolsep{\fill}}lrrrrrr@{}}
\toprule
& \multicolumn{3}{c}{Meaning preserved} & \multicolumn{3}{c}{Relation changed}\\
\cmidrule(lr){2-4}\cmidrule(lr){5-7}
Method & Orig. F1 & Reword. F1 & All 3 correct & Pair 0 & Pair 1 & All 4 correct\\
\midrule
SealCall & 0.785 & 0.789 & 0.776 & 0.013 & 0.013 & 0.000\\
Without text embeddings & 0.669 & 0.672 & 0.706 & 0.000 & 0.000 & 0.000\\
Without structured features & 0.675 & 0.677 & 0.708 & 0.005 & 0.013 & 0.000\\
TF-IDF + features / LR & 0.876 & 0.887 & 0.870 & 0.013 & 0.013 & 0.000\\
BGE + features / LR & 0.895 & 0.891 & 0.880 & 0.013 & 0.000 & 0.000\\
Same-input LR & 0.871 & 0.887 & 0.850 & 0.025 & 0.025 & 0.000\\
Tool-name probe / LR & 0.792 & 0.792 & 0.800 & 0.000 & 0.000 & 0.000\\
\bottomrule
\end{tabular*}
\end{table*}

Table~\ref{tab:relation-families} splits the 80 worlds by relation
family. SealCall reaches F1 0.607 on source authority and 0.171 on
purpose scope; no family yields an all-four-correct world. The four
families share two surface grammars, and each world pairs one safe and
one unsafe case under the same candidate and history.

\begin{table}[H]
\caption{Constructed-policy diagnostics by relation family (20 worlds
per family). Each world pairs one safe and one unsafe case.}
\label{tab:relation-families}
\centering
\setlength{\tabcolsep}{2pt}
\begin{tabular}{llrr}
\toprule
Method & Family & F1 & All-4\\
\midrule
SealCall & recipient\_binding & 0.455 & 0.000\\
SealCall & purpose\_scope & 0.171 & 0.000\\
SealCall & source\_authority & 0.607 & 0.000\\
SealCall & prior\_precondition & 0.598 & 0.000\\
Without text embeddings & recipient\_binding & 0.421 & 0.000\\
Without text embeddings & purpose\_scope & 0.323 & 0.000\\
Without text embeddings & source\_authority & 0.561 & 0.000\\
Without text embeddings & prior\_precondition & 0.590 & 0.000\\
TF-IDF + features / LR & recipient\_binding & 0.629 & 0.000\\
TF-IDF + features / LR & purpose\_scope & 0.196 & 0.000\\
TF-IDF + features / LR & source\_authority & 0.637 & 0.000\\
TF-IDF + features / LR & prior\_precondition & 0.667 & 0.000\\
Same-input LR & recipient\_binding & 0.506 & 0.000\\
Same-input LR & purpose\_scope & 0.091 & 0.000\\
Same-input LR & source\_authority & 0.635 & 0.000\\
Same-input LR & prior\_precondition & 0.655 & 0.000\\
\bottomrule
\end{tabular}
\end{table}

\section{Source Operating Curves}
\label{app:operating-curves}
Figure~\ref{fig:operating} averages held-out ROC curves within each
source, outer fold, and seed before averaging folds and seeds. Scores
from different fitted models are never pooled. The curves describe
ranking across thresholds; they do not select or change the frozen
calibration thresholds.

\begin{figure*}[t]
\centering
\includegraphics[width=\textwidth]{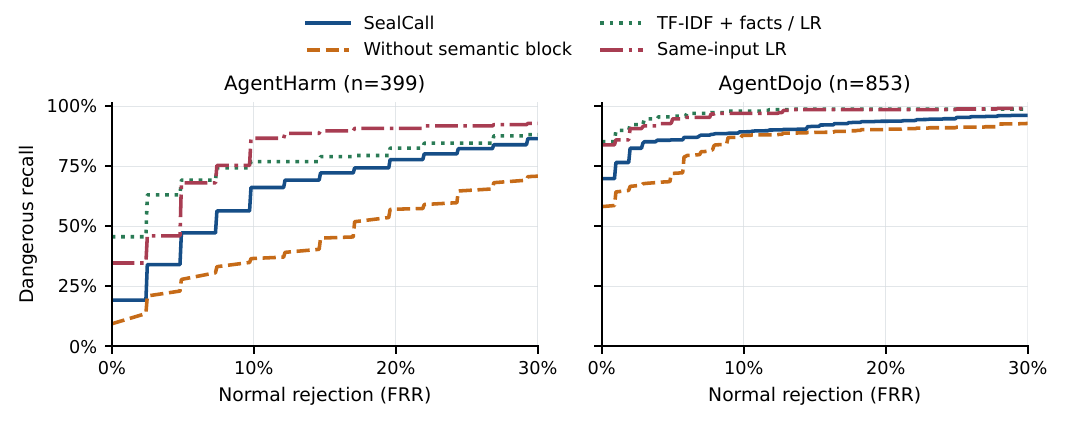}
\caption{Source-specific operating curves from held-out fold scores.
Scores from different fitted models are never pooled.}
\Description{Two panels compare SealCall, removal of text embeddings,
TF-IDF with features, and same-input logistic regression on AgentHarm
and AgentDojo. Each panel averages within-fold ROC curves rather than
pooling score scales.}
\label{fig:operating}
\end{figure*}

\end{document}